\title{On overall measure of non-classicality of $N$-level quantum system and its universality  in the  large $N$ limit}
\author{V. Abgaryan\inst{1,2}\orcidID{0000-0001-6713-4095} \and
A. Khvedelidze\inst{1,3,4}\orcidID{0000-0002-5953-0140} \and
I. Rogojin \inst{1}\orcidID{0000-0001-6440-5451
}}
\authorrunning{V. Abgaryan et al.}
\institute{Laboratory of Information Technologies, Joint Institute for Nuclear Research, Dubna, Russia\and
Peoples’ Friendship University of Russia (RUDN University)
6 Miklukho-Maklaya St, Moscow, 117198, Russian Federation \and
A. Razmadze Mathematical Institute, Iv. Javakhishvili Tbilisi State University, Tbilisi, Georgia\and
Institute of Quantum Physics and Engineering Technologies, Georgian Technical University, Tbilisi, Georgia\\
\email{vahagnab@googlemail.com}}
\begin{document}

\maketitle
\begin{abstract}In this report we are aiming at introducing a global measure of non-classicality of the state space of $N$-level quantum systems and estimating it in the limit of large $N$. For this purpose we employ the Wigner function negativity as a non-classicality criteria. Thus, the specific volume of the support of negative values of Wigner function is treated as a measure of non-classicality of an individual state. Assuming that the states of an $N$-level quantum system are distributed by Hilbert-Schmidt measure (Hilbert-Schmidt ensemble), we define the global measure as the average non-classicality of the individual states over the Hilbert-Schmidt ensemble.
We present the numerical estimate of this quantity as a result of random generation of states, and prove a proposition claiming its exact value in the limit of  $N\to \infty$.

\keywords{Wigner function \and phase space formalism   \and non-classicality \and Hilbert-Schmidt measure .}

\end{abstract}

\section{Introduction}
With the rise of quantum information and computation paradigms alongside the adjacent fields, one time and again encounters the characterization "{\it non-classical}" when describing the quantum states involved.
It must be pointed out that the notion of non-classicality of quantum states is not well defined. Under this label, we usually understand  the effects predicted by quantum mechanics which are incomprehensible from the standpoint of classical intuition. These include everything spanning from quantum entanglement and other purely quantum correlations to sub-poissonian statistics and squeezing of electromagnetic fields. Quite often these become resources for new powerful techniques, as is the case, for example, with quantum entanglement. A question of quantitative description of the degree of non-classicality and hence of the resource itself arises here. Obviously, due to the wideness and vagueness of the question, it would be naive to assume the existence of a universal measure encompassing the intensities  of all the quantum effects. However, it seems that the central object of quantum mechanics on the phase space, the Wigner function, somehow encodes   the crucial information about the non-classical features of the state through the property of having negative values. Indeed, to name just a few examples: it has been shown that quantum circuits where the initial state together with the quantum operations is representable with positive Wigner functions can be classically efficiently simulated~\cite{Mari}; s-waves are entangled if and only if corresponding Wigner function has negative domains~\cite{Dahl}; the negativity volume of the Wigner function is an entanglement indicator for hybrid qubit\--bosonic states if certain conditions are met~\cite{Ievgen}, etc. 

Elaborating on the property of Wigner function to have negative values, several measures of non-classicality  have been introduced (see~\cite{Zyczkowski} and references therein). Here, we generalize these  well-established ideas from the level of individual states to the whole state space. 

The article is organized as follows. In the next section we introduce the necessary basics about the Wigner function. In section~\ref{section3}  we define the main quantities  which will be used in the rest of paper. Section~\ref{section4.1}  contains  results on the global measure of non-classicality  of density matrices from the Hilbert-Schmidt ensemble. Finally,  in section~\ref{subsection4.1} the  
analysis of behavior of the introduced measure of non-classicality for large $N$ is given. 

\section{The Wigner function of a density operator of $N$-level system}\label{section2} 
\label{sec:WF}

Wigner function~\cite{Wigner} was introduced in an attempt of phase space description of quantum mechanics. For quantum states represented by a density operator $\hat{\rho} \in \mathcal{D}(L^2(\mathbb{R}^n))$ acting on the Hilbert space $\mathcal{H}=L^2(\mathbb{R^n})$ the Wigner function of $\hat{\rho}$ is defined over a phase space $(\mathbb{R}^{2n}\,, w )$ with the standard  symplectic  2-form $w:=\sum_{j}^n\,dp_j\wedge d q_j
$ and is given by the 
so-called Wigner transform:
\begin{equation}
\label{eq:WFC}
   W_{\hat{\rho}}(\boldsymbol{q},\boldsymbol{p})= \left(\frac{1}{2 \pi \hbar}\right)^{n}\int_{\mathbb{R}^n}\,d{\boldsymbol{\eta}}\,\left\langle  \boldsymbol{q}-\frac{\boldsymbol{\eta}}{2}\middle|\,
   \hat{\rho}\,\middle|{\boldsymbol{q}}+\frac{\boldsymbol{\eta}}{2}\right\rangle e^{\displaystyle{\frac{i}{\hbar}\boldsymbol{p}\boldsymbol{\eta}}}.
\end{equation}
According to the Weyl-Wigner formalism one can establish an invertible map between  
the self-adjoint semipositive definite operator $\hat{\rho} $ and its Weyl symbol $(2\pi\hbar)^n W_{\hat{\rho}}(\boldsymbol{q},\boldsymbol{p})\,$ in (\ref{eq:WFC}) 
\begin{equation}
\label{eq:GWWM}
\hat{\varrho}\  \rightleftarrows \ W_{\rho} (\boldsymbol{q},\boldsymbol{p})\,. 
\end{equation}
Generalizing the Weyl-Wigner mapping 
(\ref{eq:GWWM})
to the case of an arbitrary self-adjoint operator,  
\begin{equation}
\label{eq:WWMapping}
  \hat{A} \rightleftarrows W_A(\boldsymbol{q},\boldsymbol{p})\,,
\end{equation}
the quantum mechanical prediction  of  the operator, i.e.,  $\mathbb{E}[\hat{A}]=\mbox{tr}[\hat{\rho} \hat{A}]$ is expressible in the 
form of conventional
ensemble average in classical mechanics defined as the mean of an operator symbol $A(\boldsymbol{q},\boldsymbol{p})$ over  the phase-space with the distribution $W_\varrho : $
\begin{equation}
\mathbb{E}[\hat{A}]= \overline{A}\,
\qquad 
\overline{A}:=\int_{\mathbb{R}^{2n}}\,d\boldsymbol{p}\,d\boldsymbol{q}\, W_A(\boldsymbol{q},\boldsymbol{p})\, W_{\rho}(\boldsymbol{p},\, \boldsymbol{q})\,.
\end{equation}
However, the similarity between the quantum and classical expressions is somewhat illusive. Though, the marginal distributions of momenta on one side and of coordinates on the other are true probability  density functions, due to the limitation of simultaneous measurements of coordinates and momenta in quantum mechanics by Heisenberg uncertainty principle, Wigner function is not free from "faults". Namely, it may be shown that there are states for which Wigner function has negative values. Hence it can't be considered as a true probability density function, and is usually called a quasiprobability density function.

As it was mentioned above the Wigner transform is well adapted to the case of a quantum mechanical system associated to the Hilbert space $\mathcal{H}=L^2(\mathbb{R^n})\,.$ 
The natural question arises how to deal with other quantum systems whose Hilbert space $\mathcal{H}$ is different from 
$L^2(\mathbb{R^n})\, ?$ 
In 1957,  based on the Weyl-Wigner approach, R.L.Stratonovich formulated  ~\cite{Stratonovich}   general principles of constructing the mapping (\ref{eq:WWMapping}),  which should be satisfied for any quantum system associated to some Hilbert space.  These principles, later on, received the name of Stratonovich-Weyl (SW) correspondence.  
Since in the present note we are interested in quantification of ``quantumness'' in systems whose Hilbert space is $\mathcal{H}=\mathbb{C}^N$, below   basics of SW correspondence are reproduced in a form  adapted  to the case of finite-dimensional quantum systems.

The basic idea of realisation of mapping  (\ref{eq:WWMapping}) is to use the kernel operator $\Delta(\Omega_N)$ defined over the symplectic manifold $\Omega_N$ endowed with some symplectic 2-form. The mapping is given by formulae 
\begin{eqnarray}
\label{eq:DMWigner}
&&W_A(\Omega_N) = \mbox{tr}\left(A \Delta(\Omega_N)\right)\,,\\
&&\hat{A} = \int_{\Omega_N} \mathrm{d}\Omega_N\, \Delta(\Omega_N) W_A(\Omega_N) \,;
\end{eqnarray}
 Here the kernel $\Delta(\Omega_N)$  is self-dual, in sense that the same kernel defines as direct as well an inverse mapping (\ref{eq:DMWigner}), and it is the so-called Staratonovich-Weyl kernel.  
According to the Stratonovich-Weyl principles in order to have a correct phase-space formulation of quantum theory SW kernel should provide fulfilment of the following compulsory requirements:
\begin{itemize}
    \item the kernel  must be Hermitian, 
    $\Delta(\Omega_N)^{\dagger}=\Delta(\Omega_N)$ guaranteeing the reality of symbols; 
    \item the kernel  must be the trace class operator, i.e.,  $\int_{\Omega_N}d\Omega_N\, \Delta(\Omega_N)=1, $  ensuring completeness of quantum states as well as classical ones;
    \item  the unitary symmetry  of states $\rho^{\prime}=g\,\rho g^{\dagger}\,, \ g \in SU(N)$ induces the adjoint  transformation of SW kernel,  $\Delta(\Omega_N^{\prime})
    =g^{\dagger} \Delta(\Omega_N) g$ where $\Omega_N^{\prime}$ is an image of point $\Omega_N$ under the action of $g\,.$ 
\end{itemize}
It has been shown~\cite{AbKh1} that for $N$-level quantum system the Stratonovich-Weyl correspondence clauses admit simple formulation in the form of algebraic equations on  spectrum of SW kernel:
\begin{equation}
\label{eq:Master}
    \mbox{tr}[\Delta(\Omega_N)]=1 \quad {\text{and}}\quad \mbox{tr}[\Delta(\Omega_N)^{2}]=N\,.
\end{equation}
These equations leave $N-2$ parametric freedom of choice of the spectrum of SW kernel. 
Taking into account this ambiguity 
we can write the SVD decomposition of SW kernel 
\begin{equation}
\label{eq:SVDSW}
\Delta(\Omega_N|\boldsymbol{\nu}) = U(\Omega_N)\, P(\boldsymbol{\nu})\, U^{\dagger}(\Omega_N)\,,
\end{equation}
where  $P(\boldsymbol{\nu})$ is a diagonal matrix whose elements are 
specifically ordered eigenvalues of the SW kernel,  $\boldsymbol{spec}(\Delta)=\{\pi_1(\boldsymbol{\nu}), \pi_2(\boldsymbol{\nu}), \dots, 
\pi_N(\boldsymbol{\nu}) \}\,$. Eigenvalues $\pi(\boldsymbol{\nu})$ are functions of a real $(N-2)$- tuple $\boldsymbol{\nu}=\left(\nu_{1},\,\cdots,\,\nu_{N-2}\right)$ parameterising the moduli space of solutions to (\ref{eq:Master}). 
Hereafter, dealing with the Wigner function of density matrix $\rho$ we will point at this ambiguity by explicitly writing dependence of SW kernel on the moduli space  parameters  $\boldsymbol{\nu}$:
\begin{equation}
    W_{\rho}^{(\boldsymbol{\nu})}(\Omega_N) = \mbox{tr}[\rho\,\Delta(\Omega_N|\boldsymbol{\nu})]\,.
\end{equation}
See more on the moduli space of parameters in~\cite{AVA1}.

Finally, a few remarks on symplectic space $\Omega_N\,$ are  in order. From the SVD decomposition (\ref{eq:SVDSW}) it follows that its 
 structure, particularly its dimension depends on the choice of kernel. 
Now, assuming that its isotropy group $H\in U(N)$ is of the form 
\[
H_{\boldsymbol{k}}={U(k_1)\times U(k_2) \times U(k_{s+1})}\,,
\]
then the corresponding phase-space $\Omega_N$ can be identified with a complex flag variety  
$
\mathbb{F}^N_{d_1,d_2, \dots, d_s}=
{U(N)} / {H}\,, 
$
where 
$(d_1, d_2, \dots, d_s)$ are  positive integers with sum
$N $, such that  $k_1=d_1$ and $k_{i+1}=d_{i+1}-d_i$ with $d_{s+1}=N\,.$ 
Therefore,  each SW kernel is in one-to one correspondence with a point 
of  moduli space (with $\boldsymbol{\nu}$\-- being the corresponding coordinate)  and it is defined over the phase  
$\Omega_{N,\boldsymbol{k}}$, member of the finite family of flag varieties labeled by an integer $(s+1)$-tuple 
$\boldsymbol{k}=(k_1, \dots , k_{s+1})\,.$
The volume form on $\Omega_{N,\boldsymbol{k}}$ is determined by the 
bi-invariant normalised Haar measure $d\mu_{SU(N)}$ on $SU(N)$  group ~\cite{AbKh1}:
\begin{equation}
    d \Omega_{N,\boldsymbol{k}}=N\,\mbox{Vol}(H_{\boldsymbol{k}})\,\frac{d\mu_{SU(N)}}{d\mu_{H_{\boldsymbol{k}}}}\,,
\end{equation}
where $d\mu_{H_{\boldsymbol{k}}}$ is the bi-invariant measure over the isotropy group $H_{\boldsymbol{k}}$.

\section{Measures of non-classicality of state and overall quantum system}
\label{section3}

Before introducing the main quantity we are interested in, it is worth to remind a few auxiliary notions. We begin with the definition of the state space $\mathfrak{P}_N$ of an  $N$-level quantum system. 

\begin{definition}The state space 
$\mathfrak{P}_N$ is a $N^{2}-1$ dimensional subset in the space of $N \times N$ complex matrices $M_{N}(\mathbb{C})$ given by following conditions:
\begin{equation}\label{eq:StateSpace}
    \mathfrak{P}_N =\{ X \in M_N(\mathbb{C}) \ |\ X=X^\dagger\,,\quad  X \geq 0\,,  \quad \mathrm{tr}\left( X \right) = 1   \}\,.
\end{equation}
\end{definition}
Let, $\Delta(\Omega_N\, |\, \boldsymbol{\nu})\,$ be the Stratonovich-Weyl (SW) kernel with moduli parameter $\boldsymbol{\nu}\,.$ 
Due to possible symmetries of state  $\rho$  and SW kernel the corresponding WF function has domain of definition not over the whole $\Omega_N$, but is restricted  to its certain  subset.  Having in mind this fact  we  introduce two additional definitions.
\begin{definition}
 $\Omega_{N}[\rho\,|\,{\boldsymbol{\nu}}] \in \Omega_N$
 represents a support of WF associated to a given state $\rho\in \mathfrak{P}_N\,$ and SW kernel.
\end{definition}
\begin{definition}
 We call $\Omega_{N}^{(-)}[\rho\,|\,{\boldsymbol{\nu}}]$  the negative support  of the Wigner function associated to a given SW kernel and state $\rho\in \mathfrak{P}_N\,,$
\begin{equation}
\Omega_{N}^{(-)}[\rho\,|\,{\boldsymbol{\nu}}]
=\left\{\omega \in \Omega_{N}[\rho\,|\,{\boldsymbol{\nu}}] \,\, |\,\,  W_{\rho}^{\boldsymbol{\nu}}(\omega) <0 \right\}.
\end{equation}
\end{definition}
Associating non classicality  with the discrepancy  between positivity requirement on classical  probability distribution and a  property of the Wigner function to attain negative values,  we introduce a  measure quantifying quantumness of state via  a relative volume of the subset of phase space where this discrepancy occurs.
The next definitions give  formalization of this idea.
\begin{definition}
For a state $\rho$ of an $N$-dimensional quantum system we define its non-classicality measure (or quantumness) $\mathfrak{Q}_N
[\rho\,, \boldsymbol{\nu}]$ as
\begin{equation}
 \label{eq:NCM}
 \mathfrak{Q}_N[\rho\,, {\boldsymbol{\nu}}]=
 \frac{\mathrm{Vol}(\Omega_{N}^{(-)}[\rho\,|\,{\boldsymbol{\nu}}])}{\mathrm{Vol}
 \left(\Omega_{N}[\rho\,|\,{\boldsymbol{\nu}}]\right)}\,.
\end{equation}
\end{definition}
It is necessary to note that in definition (\ref{eq:NCM}) it is assumed that the volume is evaluated using the symplectic volume form which is a projection of the corresponding volume form on the phase space $\Omega_N$ to the subset $\Omega_{N}[\rho\,|\,{\boldsymbol{\nu}}]$. 
\begin{definition}
We call the following unions, 
\begin{equation}
\Omega_{N}[\boldsymbol{\nu}]
=\bigcup_{\rho \in \mathfrak{P}_N} \Omega_{N}[\rho\,|\,{\boldsymbol{\nu}}]\,, \quad \mathrm{and} \quad \Omega^{(-)}_{N}[\boldsymbol{\nu}]=\bigcup_{\rho \in \mathfrak{P}_N} \Omega_{N}^{(-)}[\rho\,|\,{\boldsymbol{\nu}}]
\end{equation}
as  the ``symplectic superspace'' and the collection of supports of  negativity of the Wigner function
will be called correspondingly as ``negativity supersupport''.
\end{definition}
This definitions are in given in a sense   of the famous Wheeler’s superspace notion in General Relativity
(see \cite{Wheeler}). Basically $\Omega_{N}[\boldsymbol{\nu}]$ is the collection of  the supports of the WF of all possible states of $N$-level quantum system with fixed SW kernel. 
Following  the same logic as  before one can  introduce the  measure of  quantumness on the ``symplectic superspace'' as well.
\begin{definition}
For a given SW kernel,  the global non-classicality measure  $\mathfrak{Q}_N[\boldsymbol{\nu}]$ of $N$-level quantum  system is 
\begin{equation}
\label{eq:GM}
\mathfrak{Q}_{N}[\boldsymbol{\nu}]=
\frac{\mathrm{Vol}_{\mathrm{g}}(\Omega_{N}^{(-)}[\boldsymbol{\nu}])}{\mathrm{Vol}_{\mathrm{g}}
 \left(\Omega_{N}[\boldsymbol{\nu}]\right)}\,.
\end{equation}
\end{definition}
In the definition (\ref{eq:GM}) under the 
volume of the ``symplectic superspace'' we assume
a result of an average of the symplectic volume of  $\mathrm{Vol}
(\Omega_{N}[\rho, \boldsymbol{\nu}
 ])$ over all possible states 
distributed in accordance with the  measure $\mathrm{d}\mathfrak{m}_{\mathrm{g}}[\rho]\,,$
associated to a certain Riemannian
metric  $\mathrm{g}$ on $\mathfrak{P}_N$:
\begin{equation}
\mathrm{Vol}_{\mathrm{g}}\left(\Omega_{N}[\boldsymbol{\nu}]\right)=\int_{\mathfrak{P}_{N}}
\mathrm{d}\mathfrak{m}_{\mathrm{g}}[\rho]
\, \mathrm{Vol}
(\Omega_{N}[\rho, \boldsymbol{\nu}
 ])
\end{equation}

Below we introduce notions  allowing us to  relate  the definition of the global indicator of  system quantumness $\mathfrak{Q}_{N}[\boldsymbol{\nu}]$
given in terms of the ``symplectic superspace''  with  the corresponding notion formulated 
in terms of the state space $\mathfrak{P}_N\,.$
\begin{definition}
 For an arbitrary point $\omega \in \Omega_N[\rho\, |\, {\boldsymbol{\nu}}]$,  the 
 subspace  $\mathfrak{P}^{(-)}_N
 [\boldsymbol{\nu}\,|\,
 \omega]  \subset \mathfrak{P}_N$ 
 of state space 
 is defined as  
\begin{equation}
 \mathfrak{P}^{(-)}_N
 [\boldsymbol{\nu}\,|\,
 \omega]  =
 \left\{\rho \in \mathfrak{P}_{N} \, | \, \omega \in \Omega_N[\rho\, |\,\, {\boldsymbol{\nu}}] \,, \,  W_{\rho}^{{\boldsymbol{\nu}}}(
 \omega) < 0 \,\right\}\,.  
\end{equation}
\end{definition}

\begin{proposition}
The volume of 
$\mathfrak{P}^{(-)}_N
 [\boldsymbol{\nu}\,|\,
 \omega]$ evaluated with respect to the unitary invariant measure on $\mathfrak{P}_N$ is independent of $\omega\,,$
\begin{equation}
\label{eq:vint}
    \frac{\mathrm{d}}{\mathrm{d}\omega}\,
   \mathrm{Vol}_{g}\left(\mathfrak{P}^{(-)}_N
 [\boldsymbol{\nu}\,|\,
 \omega]\right)= 0\,.
\end{equation}
\end{proposition}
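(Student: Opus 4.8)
The plan is to deduce the $\omega$-independence of $\mathrm{Vol}_g(\mathfrak{P}^{(-)}_N[\boldsymbol{\nu}\,|\,\omega])$ from the covariance axiom of the Stratonovich--Weyl kernel together with the transitivity of the group action on the phase space, thereby reducing everything to the unitary invariance of the measure on $\mathfrak{P}_N$. In fact I expect to prove the stronger statement that this volume is literally constant in $\omega$, from which the vanishing of its derivative in (\ref{eq:vint}) follows at once.

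First I would fix a reference point $\omega_0\in\Omega_{N,\boldsymbol{k}}$ and, for an arbitrary $\omega$, exploit transitivity of the $SU(N)$ action on the flag variety $\Omega_{N,\boldsymbol{k}}=U(N)/H_{\boldsymbol{k}}$. This transitivity holds because the scalar subgroup $U(1)\subset U(N)$ sits inside every isotropy group $H_{\boldsymbol{k}}$ and acts trivially on the quotient, so $SU(N)$ already exhausts the orbit; hence there is $g\in SU(N)$ with $\omega=g\cdot\omega_0$. Applying the covariance relation $\Delta(g\cdot\omega_0\,|\,\boldsymbol{\nu})=g^\dagger\Delta(\omega_0\,|\,\boldsymbol{\nu})\,g$ together with cyclicity of the trace yields
\begin{equation}
W_\rho^{\boldsymbol{\nu}}(\omega)=\mathrm{tr}\!\left[\rho\,g^\dagger\Delta(\omega_0\,|\,\boldsymbol{\nu})\,g\right]=\mathrm{tr}\!\left[g\rho g^\dagger\,\Delta(\omega_0\,|\,\boldsymbol{\nu})\right]=W_{g\rho g^\dagger}^{\boldsymbol{\nu}}(\omega_0)\,.
\end{equation}
Thus $W_\rho^{\boldsymbol{\nu}}(\omega)<0$ holds exactly when $g\rho g^\dagger\in\mathfrak{P}^{(-)}_N[\boldsymbol{\nu}\,|\,\omega_0]$, and the very same covariance shows that the support condition $\omega\in\Omega_N[\rho\,|\,\boldsymbol{\nu}]$ is equivalent to $\omega_0\in\Omega_N[g\rho g^\dagger\,|\,\boldsymbol{\nu}]$, so it imposes no extra constraint. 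This identifies $\mathfrak{P}^{(-)}_N[\boldsymbol{\nu}\,|\,\omega]$ with the preimage of $\mathfrak{P}^{(-)}_N[\boldsymbol{\nu}\,|\,\omega_0]$ under the conjugation map $\rho\mapsto g\rho g^\dagger$.

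Finally, since conjugation by a fixed unitary $g$ is a bijection of $\mathfrak{P}_N$ that preserves the unitary-invariant measure $\mathrm{d}\mathfrak{m}_g[\rho]$ by definition, the volume of this preimage equals $\mathrm{Vol}_g(\mathfrak{P}^{(-)}_N[\boldsymbol{\nu}\,|\,\omega_0])$. As $\omega$ was arbitrary, the volume is the same for every $\omega$, and (\ref{eq:vint}) follows. The only delicate point is the geometric bookkeeping of the first step --- confirming that every $\omega$ is reachable from $\omega_0$ by an $SU(N)$ element and that the kernel transforms exactly as postulated; once this is secured, the measure-invariance conclusion is essentially tautological.
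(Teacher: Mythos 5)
Your proposal is correct and is essentially the paper's own argument: the paper writes the volume as an integral of the Heaviside function $\theta\left[-\mathrm{tr}\left(U(\omega)P^{\boldsymbol{\nu}}U(\omega)^{\dagger}\rho\right)\right]$ and changes variables $\rho^{\prime}=U(\omega)^{\dagger}\rho\,U(\omega)$, which is exactly your covariance-plus-conjugation step, with the conclusion in both cases resting on the unitary invariance of the measure $\mathrm{d}\mathfrak{m}_{\mathrm{g}}[\rho]$ under $\rho\mapsto g\rho g^{\dagger}$. If anything, your write-up is slightly more careful than the paper's, since you explicitly verify $SU(N)$-transitivity on the flag variety and the covariant transformation of the support condition $\omega\in\Omega_N[\rho\,|\,\boldsymbol{\nu}]$, points the paper leaves implicit.
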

\begin{proof}
Indeed, 
let us write down the volume integral (\ref{eq:vint})
over negativity domain via the 
Heaviside step function $\theta[-W_{\rho}^{{\boldsymbol{\nu}}}(
 \omega) ]$ 
and use SVD decomposition (\ref{eq:SVDSW})
for SW kernel  $\Delta(\omega|\boldsymbol{\nu})$ of the Wigner function 
\begin{equation}
 \label{eq:Volin}
 \mathrm{Vol}_{g}\left(\mathfrak{P}^{(-)}_N
 [\boldsymbol{\nu}\,|\,
 \omega]\right)=\int_{\mathfrak{P}_{N}}\mathrm{d}\mathfrak{m}_{\mathrm{g}}[\rho]\,\theta\left[-\mathrm{tr}(U(\omega) P^{\boldsymbol{\nu}} U(\omega)^{\dagger} \rho)\right]=\int_{\mathfrak{P}_{N}}\mathrm{d}\mathfrak{m}_{\mathrm{g}}[\rho^
 \prime]\,\theta\left[-\mathrm{tr}( P^{\boldsymbol{\nu}}  \rho^{\prime})\right]\,,
\end{equation}
In the last line of (\ref{eq:Volin}) we perform transformation  $\rho^{\prime}= U(\omega)^{\dagger} \rho U(\omega)$. 
Noting that the state space $\mathfrak{P}_{N}$ is $SU(N)$ invariant space endowed  with  the invariant measure we get convinced that  $\mathrm{Vol}\left(\mathfrak{P}^{(-)}_N
 [\boldsymbol{\nu}\,|\,
 \omega]\right)$ is the same for all $\omega \in \Omega_N[\rho\, |\, {\boldsymbol{\nu}}]\,.$
\end{proof}

Based on this observation, afterwards we choose $\omega$ corresponding to the diagonal SW kernels, i.e., $\omega = 0\, $ and simplify notation of the negativity subset,    $\mathfrak{P}_{N}^{(-)}[\boldsymbol{\nu}]$.

Now we are in position to formulate the Proposition which interrelates two ways of interpretation of the global measure of quantumness.
\begin{proposition}\label{prop2}
 The global non-classicality measure  $\mathfrak{Q}_N[\boldsymbol{\nu}]$  can be expressed as the relative volume of the subset  $\mathfrak{P}_{N}^{(-)}[\boldsymbol{\nu}]$ with respect to total volume of state space  $\mathfrak{P}_{N}$:
\begin{equation}
\label{eq:GMQS}
\mathfrak{Q}_{N}[\boldsymbol{\nu}]=
\frac{\mathrm{Vol}_{\mathrm{g}}(\mathfrak{P}_{N}^{(-)}[\boldsymbol{\nu}])}
{\mathrm{Vol}_{\mathrm{g}}(\mathfrak{P}_{N}[\boldsymbol{\nu}])}\,.
\end{equation}
where the volume of  state space is evaluated with respect to the metric $g$
generating the measure
$\mathrm{d}\mathfrak{m}_{\mathrm{g}}[\rho]$
in definition 
(\ref{eq:GM}).
\end{proposition}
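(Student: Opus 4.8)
The plan is to recast both the numerator and the denominator of the defining ratio~(\ref{eq:GM}) as double integrals over the product $\mathfrak{P}_N\times\Omega_N[\boldsymbol{\nu}]$ and then to interchange the order of integration, so that the $\omega$-independence established in the preceding proposition, cf.~(\ref{eq:vint}), can be used to factor out the phase-space variable. First I would write the volume of the negativity supersupport as the average of the individual negative-support volumes, expressing the latter through the Heaviside function exactly as in~(\ref{eq:Volin}):
\begin{equation}
\mathrm{Vol}_{\mathrm{g}}\bigl(\Omega_{N}^{(-)}[\boldsymbol{\nu}]\bigr)
=\int_{\mathfrak{P}_N}\mathrm{d}\mathfrak{m}_{\mathrm{g}}[\rho]\int_{\Omega_N[\boldsymbol{\nu}]}\mathrm{d}\Omega_N\,
\theta\bigl[-W_{\rho}^{\boldsymbol{\nu}}(\omega)\bigr]\,,
\end{equation}
the Heaviside factor vanishing automatically wherever $\omega\notin\Omega_N[\rho\,|\,\boldsymbol{\nu}]$. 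Since the integrand is non-negative, Tonelli's theorem allows the two integrations to be swapped, and after the swap the inner integral over $\rho$ is precisely $\mathrm{Vol}_{\mathrm{g}}\bigl(\mathfrak{P}_N^{(-)}[\boldsymbol{\nu}\,|\,\omega]\bigr)$.

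The decisive step is then to invoke the $\omega$-independence~(\ref{eq:vint}): this inner volume does not depend on $\omega$, so it comes out of the remaining symplectic integral, leaving $\mathrm{Vol}_{\mathrm{g}}\bigl(\Omega_{N}^{(-)}[\boldsymbol{\nu}]\bigr)=\mathrm{Vol}_{\mathrm{g}}\bigl(\mathfrak{P}_N^{(-)}[\boldsymbol{\nu}]\bigr)\,\mathrm{Vol}\bigl(\Omega_N[\boldsymbol{\nu}]\bigr)$, where $\mathfrak{P}_N^{(-)}[\boldsymbol{\nu}]$ is the $\omega=0$ representative fixed after that proposition and $\mathrm{Vol}\bigl(\Omega_N[\boldsymbol{\nu}]\bigr)$ is the bare symplectic volume of the superspace.

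Next I would repeat the construction for the denominator, starting from $\mathrm{Vol}_{\mathrm{g}}(\Omega_N[\boldsymbol{\nu}])=\int_{\mathfrak{P}_N}\mathrm{d}\mathfrak{m}_{\mathrm{g}}[\rho]\,\mathrm{Vol}(\Omega_N[\rho\,|\,\boldsymbol{\nu}])$ and writing $\mathrm{Vol}(\Omega_N[\rho\,|\,\boldsymbol{\nu}])$ as the integral over $\Omega_N[\boldsymbol{\nu}]$ of the characteristic function of the support $\Omega_N[\rho\,|\,\boldsymbol{\nu}]$. The same Tonelli swap produces an inner integral $\mathrm{Vol}_{\mathrm{g}}(\mathfrak{P}_N[\boldsymbol{\nu}\,|\,\omega])$ over those states whose Wigner support contains $\omega$, and a verbatim repetition of the $SU(N)$-invariance argument underlying~(\ref{eq:vint}) (with the characteristic function of the support in place of the Heaviside factor) shows it too is $\omega$-independent, equal to $\mathrm{Vol}_{\mathrm{g}}(\mathfrak{P}_N[\boldsymbol{\nu}])$. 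Hence $\mathrm{Vol}_{\mathrm{g}}(\Omega_N[\boldsymbol{\nu}])=\mathrm{Vol}_{\mathrm{g}}(\mathfrak{P}_N[\boldsymbol{\nu}])\,\mathrm{Vol}(\Omega_N[\boldsymbol{\nu}])$. Forming the ratio~(\ref{eq:GM}), the common symplectic factor $\mathrm{Vol}(\Omega_N[\boldsymbol{\nu}])$ cancels and one obtains~(\ref{eq:GMQS}).

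The Fubini--Tonelli exchange itself is routine given the non-negativity of the integrands, so the real content lies in the invariance inputs and in the measure-theoretic bookkeeping of the supports: one must check that the union defining $\Omega_N[\boldsymbol{\nu}]$ exhausts the phase space up to a null set, that the same integration domain and hence the same factor $\mathrm{Vol}(\Omega_N[\boldsymbol{\nu}])$ appear in both numerator and denominator, and that the $\omega$-independence of the inner state-space volumes holds uniformly. I expect this last point, the clean factorization of the denominator through the invariance argument, to be the main obstacle; once it is secured the cancellation of the symplectic volume is immediate and the equivalence~(\ref{eq:GMQS}) follows.
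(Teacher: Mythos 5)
Your proposal is correct and takes essentially the same route as the paper: both proofs rest on factorizing $\mathrm{Vol}_{\mathrm{g}}(\Omega_{N}^{(-)}[\boldsymbol{\nu}])$ and $\mathrm{Vol}_{\mathrm{g}}(\Omega_{N}[\boldsymbol{\nu}])$ into a state-space volume times a common symplectic factor that cancels in the ratio, with the $SU(N)$-invariance ($\omega$-independence) proposition supplying the key step. The difference is only presentational: where the paper dismisses non-generic states by their zero measure, notes that over the generic stratum the support structure is fixed by the kernel (identifying the common factor as $\mathrm{Vol}\left(U(N)/U(1)^N\right)$), and then compresses the negativity-part argument into ``the same argumentation,'' you make the underlying Fubini--Tonelli interchange explicit in both numerator and denominator --- a more careful rendering of the identical idea rather than a different proof.
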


\begin{proof}
At first let us note that contribution  to 
(\ref{eq:GMQS})
from components of ``symplectic superspace'' associated to non-generic states (degenerate and non-maximal rank density matrices) is zero owing to zero integration measure of this states.

Hence, the integration effectively projects only to the 
components of ``symplectic superspace'' corresponding to 
the stratum of the generic states, whose isotropy group is conjugated to subgroup  $H=U(1)^N\,.$ Therefore 
the structure of $
\Omega_{N}[\rho, \boldsymbol{\nu}
 ]$ is solely determined by the SW kernel and does not depend on the $\rho$, which means 
 that, 
 \begin{equation}
   \mathrm{Vol}_{g}(\Omega_{N}[\boldsymbol{\nu}])=\mathrm{Vol}_{g}(\mathfrak{P}_{N})\mathrm{Vol}\left(\frac{U(N)}{U(1)^N}\right)\,,  
 \end{equation}
 The same argumentation lead the relation
 \begin{equation}
  \mathrm{Vol}_{g}(\Omega_{N}^{(-)}
  [\boldsymbol{\nu}]=
  \mathrm{Vol}_g{}(\mathfrak{P}^{(-)}_{N}[\boldsymbol{\nu}])\mathrm{Vol}\left(\frac{U(N)}{U(1)^N}\right)\,,    
 \end{equation}
 thus proving the Proposition.
 \end{proof}

\section{Global measure  of quantumness  as  geometric probability}\label{section4}

In this section we outline an interpretation of the above introduced measure of nonclassicality $\mathfrak{Q}_{N}[\boldsymbol{\nu}]$ as a certain geometric probability.
Indeed, according to the representation 
(\ref{eq:GMQS}) the global non-classicality measure  $\mathfrak{Q}_N[\boldsymbol{\nu}]$  can be expressed as the relative volume of the subset $\mathfrak{P}^{(-)}_N[\boldsymbol{\nu}] \in\mathfrak{P}_N\,,$ consisting out of states $\rho$ whose  Wigner functions $W_\rho(\omega\, |\, \boldsymbol{\nu})$ evaluated at some fixed point of  phase space, say $\omega=0$ are negative. 
Therefore, in consent to the Theory of Geometric Probability, this relative volume  can be identified  with the probability of
finding of states with the negative WF among a certain random ensemble of states:
\begin{equation}
\label{eq:GPGM}
\mathcal{P}^{(-)}:= \frac{Number~of~states~with~negative~ WF}{Total~number~of~generated~states }
\end{equation}
Note, that this identification of $\mathfrak{Q}_N[\boldsymbol{\nu}]$ and $\mathcal{P}^{(-)}$ is correct  if 
random states are distributed in ensemble according to the measure
$\mathrm{d}\mathfrak{m}_{\mathrm{g}}[\rho]$
in definition  (\ref{eq:GM}).

Following this identification of  
 $\mathfrak{Q}_{N}[\boldsymbol{\nu}]$ and 
probability $\mathcal{P}^{(-)}$,
we will generate random ensemble of the Hilbert-Schmidt states of $N$\--level quantum system and then construct  the corresponding Wigner functions 
with different kernels and test them on  negativity.

\subsection{Quantumness   of Hilbert-Schmidt states for different SW kernels}\label{section4.1}

There is an elegant method of generation of  random density matrices from the Hilbert-Schmidt ensemble 
. 
Its starting point is is the generation of the so-called {Ginibre ensemble}, i.e., the set of complex matrices whose elements have real and imaginary parts distributed as independent normal random variables. 
Considering a square $N\times N$ complex random matrix $z$ from the Ginibre ensemble, one can construct the density matrix  from the Hilbert-Schmidt
ensemble as 
\begin{equation}
\label{eq:HSG}
\rho_{{}_\mathrm{HS}}=\frac{z^{\dagger}z}{\mbox{tr}(z^{\dagger}z)},
\end{equation}
We have generated such  set of density matrices we evaluate  $\mathfrak{Q}_{N}(\boldsymbol{\nu})$ 
according to (\ref{eq:GPGM}),
for  the next  families of  
SW kernels:
\begin{itemize}
\item Kernels
whose isotropy group is $H=SU(N-1)$, i.e.,   $(N-1)$\-- fold degenerate  eigenvalues $\frac{1+\sqrt{1+N}}{N}$ and one smallest  eigenvalue,  $\frac{1+(1-N) \sqrt{1+N}}{N}$;
\item Kernels whose isotropy is  $H=SU(N-2)\times SU(2)$, i.e.,  $(N-2)$\-- fold degenerate eigenvalues $\frac{2-N-\sqrt{2} \sqrt{(N-2) (N-1) (1+N)}}{(N-2) N}$ and double degenerate eigenvalues $\frac{2-\sqrt{2} \sqrt{(N-2) (N-1) (1+N)}}{2 N}$;  
\item Kernels whose isotropy group is  $H=SU(N-3)\times SU(3)$, , i.e.,  $(N-3)$\-- fold degenerate eigenvalues  
$\frac{3-N-\sqrt{3} \sqrt{3-N-3 N^2+N^3}}{(N-3) N}$ and triple of degenerate eigenvalues, $\frac{3-\sqrt{3} \sqrt{3-N-3 N^2+N^3}}{3 N}$;
\item Random kernels, which almost always are generic.
\end{itemize}
\begin{figure}
    \centering
    \includegraphics{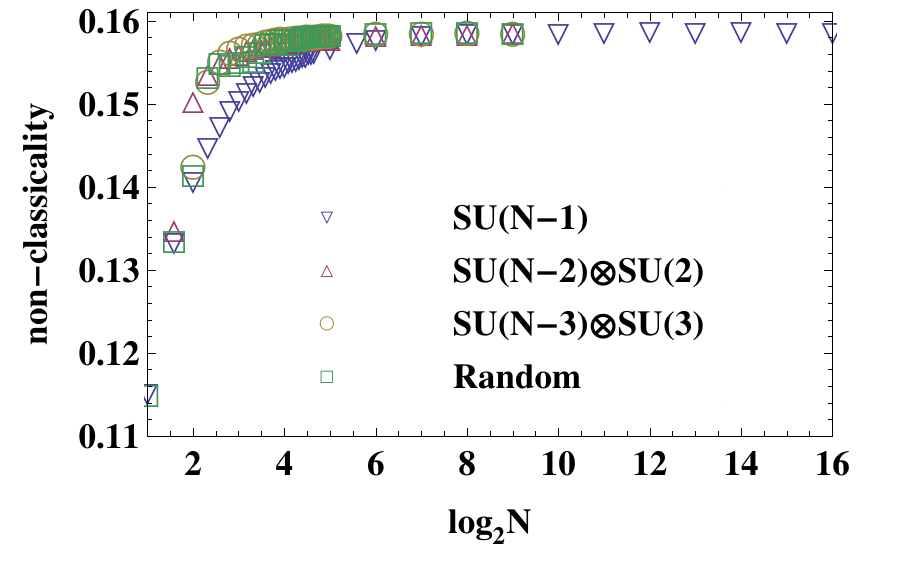}
    \caption{Dependence of $\mathfrak{Q}_{N}[\boldsymbol{\nu}]$ on number of levels  $N$ for different types  SW kernels described in the text. For systems with number of levels greater
    than $N=2^{8}$ the outputs 
    of all but one plots are suppressed, since the difference of values of $\mathfrak{Q}_{N}[\boldsymbol{\nu}]$ for different kernels are within the statistical error.}
    \label{Plot_NonCl}
\end{figure}
In  Fig.~\ref{Plot_NonCl} we have plotted $\mathfrak{Q}_{N}(\boldsymbol{\nu})$ depending on $N$, for different SW kernels.
Approximately $\sim 10^{8}$ matrices have been generated and tested on the Wigner function negativity  for each $N$. 
This plot shows that with growing number of levels the quantumness of system becomes independent of SW kernel and tends to a certain value.  
In the next section we will give argumentation of this universality of $\mathfrak{Q}_{N}(\boldsymbol{\nu})$ for the 
Hilbert-Schmidt ensemble. 

\subsection{The large $N$ limit of global non-classicality }\label{section4.2}

\label{subsection4.1}
\begin{proposition}
In the limit $N\to \infty$ the global non-classicality measure $ \mathfrak{Q}_{N}[\boldsymbol{\nu}]$ of the Hilbert-Schmidt ensemble does not depend on the choice of SW kernel. Furthermore, for the infinite level system the quantumness measure is  
\begin{equation}
       \lim_{N\to\infty}\mathfrak{Q}_{N}(\boldsymbol{\nu})=\mathrm{erfc}\left(\frac{1}{\sqrt{2}}\right)  
\end{equation}

\end{proposition}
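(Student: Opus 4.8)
The plan is to convert the stated geometric limit into a tail estimate for a sum of independent random variables, and to read the constant off from a central limit theorem, the universality appearing automatically because only the first two spectral moments of the kernel survive. First I would invoke Proposition~\ref{prop2} to replace the relative volume by a probability on the state space. Choosing the point $\omega=0$ so that the kernel is diagonal, $\Delta(0|\boldsymbol{\nu})=P(\boldsymbol{\nu})=\mathrm{diag}(\pi_1,\dots,\pi_N)$, one has
\begin{equation}
\mathfrak{Q}_N[\boldsymbol{\nu}]
=\mathrm{Prob}_{\mathrm{HS}}\!\left[\,\mathrm{tr}\big(\rho\,P(\boldsymbol{\nu})\big)<0\,\right]
=\mathrm{Prob}_{\mathrm{HS}}\!\left[\sum_{i=1}^{N}\pi_i\,\rho_{ii}<0\right].
\end{equation}
Using the Ginibre representation (\ref{eq:HSG}), write $G=z^{\dagger}z$ and $T=\mathrm{tr}\,G$; the diagonal entries $G_{ii}=\sum_{k}|z_{ki}|^{2}$ are \emph{independent} Gamma variables of shape $N$ (each a sum of $N$ i.i.d.\ exponentials), and $\rho_{ii}=G_{ii}/T$. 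Since $T>0$ almost surely, the event is equivalent to $Y:=\sum_i \pi_i G_{ii}<0$, so it suffices to analyse the sign of a sum of independent, non-identically distributed summands.

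Next I would compute the only two moments that matter. With $\mu=\mathbb{E}|z_{ki}|^{2}$ one has $\mathbb{E}[G_{ii}]=N\mu$ and $\mathrm{Var}(G_{ii})=N\mu^{2}$; combined with the defining spectral relations (\ref{eq:Master}), namely $\sum_i\pi_i=1$ and $\sum_i\pi_i^{2}=N$, this gives
\begin{equation}
\mathbb{E}[Y]=N\mu\sum_i\pi_i=N\mu,\qquad
\mathrm{Var}(Y)=N\mu^{2}\sum_i\pi_i^{2}=N^{2}\mu^{2}.
\end{equation}
The decisive point is that the standardised threshold $\mathbb{E}[Y]/\sqrt{\mathrm{Var}(Y)}=1$ is \emph{independent of the spectrum} $\{\pi_i\}$, hence of the chosen SW kernel; this is exactly the claimed kernel-independence. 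The constraint $\sum_i\rho_{ii}=1$ is incorporated by replacing $P$ with its traceless part, which merely turns $\sum_i\pi_i^{2}=N$ into $\sum_i(\pi_i-1/N)^{2}=N-1/N$ and is negligible as $N\to\infty$.

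The third step is to establish asymptotic normality of the standardised $Y$, and this is where I expect the main obstacle, since it cannot be handled by a single criterion uniformly over the moduli space. For a generic kernel all eigenvalues are $O(1)$, so $\max_i\pi_i^{2}\big/\sum_i\pi_i^{2}=O(1/N)\to0$ and the Lindeberg condition applies directly. For the degenerate families (e.g.\ the $SU(N-1)$ kernel, whose isolated eigenvalue scales like $-\sqrt{N}$ and alone carries an $O(1)$ fraction of the variance) Lindeberg \emph{fails}; there one must instead use that this dominant summand is itself a single Gamma variable of large shape $N$, hence asymptotically Gaussian, while the complementary sum is independent and asymptotically Gaussian. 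Reconciling these two mechanisms, together with the intermediate $SU(N-k)\times SU(k)$ cases, into one limit law valid uniformly in $\boldsymbol{\nu}$ is the technical heart of the argument, and it is precisely what forces the limiting distribution to depend on the kernel only through $\sum_i\pi_i$ and $\sum_i\pi_i^{2}$.

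Finally, I would transfer the limit from $Y$ to $W=Y/T$ by Slutsky's theorem, using that $T/\mathbb{E}[T]\to1$ in probability so that the ratio does not shift the standardised threshold. Evaluating the resulting Gaussian tail at the universal threshold and writing it through the complementary error function then yields
\begin{equation}
\lim_{N\to\infty}\mathfrak{Q}_N[\boldsymbol{\nu}]=\mathrm{erfc}\!\left(\frac{1}{\sqrt{2}}\right),
\end{equation}
uniformly in $\boldsymbol{\nu}$, which is the assertion. I expect the bookkeeping in the second and fourth steps to be routine once the uniform central limit theorem of the third step is in place.
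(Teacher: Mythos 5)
Your argument is, at bottom, the same as the paper's: both reduce $\mathfrak{Q}_N[\boldsymbol{\nu}]$ via Proposition~\ref{prop2} and the Ginibre representation (\ref{eq:HSG}) to the probability that $Y=\sum_i \pi_i\,(z^{\dagger}z)_{ii}<0$ (positivity of the trace denominator makes normalization irrelevant), both expand each diagonal entry into $N$ i.i.d.\ $\chi^2_2$ variables, and both extract the universal constant from the two spectral constraints (\ref{eq:Master}). The packaging differs: the paper splits the spectrum into its nonnegative and negative parts, standardizes the two independent half-sums into variables $x,y$, rewrites the event as $x<y\,t-\sqrt{t^2+1}$ with $t=\sqrt{(N-M_1)/M_1}$, and obtains kernel-independence from the identity $\mathrm{d}\mathcal{P}(t)/\mathrm{d}t=0$; you keep the sum whole and observe directly that $\mathbb{E}[Y]/\sqrt{\mathrm{Var}(Y)}=1$ exactly, for every $N$ and every spectrum — a cleaner route that makes the moduli-space invariance transparent without the $t$-derivative computation. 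Two of your technical worries can be discharged, however. The Lindeberg failure you fear for the degenerate families occurs only at the coarse level of the $N$ summands $\pi_i G_{ii}$; at the level of the $N^2$-term array $\beta^{(i)}_j=|\pi_i|\,\xi^{(i)}_j$ — the very decomposition you (and the paper) already introduced — the constraint $\mathrm{tr}\,\Delta^2=N$ forces $\max_i\pi_i^2\le N$ while the total variance is of order $N^2$, and since $\chi^2_2$ has exponential tails the Lindeberg condition holds uniformly over the moduli space, the $SU(N-1)$ kernel included, so no two-mechanism case analysis is needed. Likewise your Slutsky step and the ``traceless part'' remark are superfluous, and the latter is wrong as stated: replacing $P$ by $P-\mathbb{1}/N$ changes the event to $\mathrm{tr}((P-\mathbb{1}/N)\rho)<-1/N$ rather than leaving it invariant; but nothing of the sort is needed, since you already noted that $\{Y/T<0\}=\{Y<0\}$ because $T>0$ almost surely.

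One genuine discrepancy must be flagged in your final step. Your own computation gives a standardized threshold equal to $1$, so the limiting probability is the Gaussian tail $\Phi(-1)=\frac{1}{2}\,\mathrm{erfc}(1/\sqrt{2})\approx 0.1587$, not $\mathrm{erfc}(1/\sqrt{2})\approx 0.3173$ as your last display asserts. You have inherited an internal inconsistency of the paper itself: the proposition as stated reads $\mathrm{erfc}(1/\sqrt{2})$, while the paper's proof concludes $\mathcal{P}=\frac{1}{2}\,\mathrm{erfc}(1/\sqrt{2})=0.158655$, the value consistent with the numerics of Fig.~\ref{Plot_NonCl}. So your derivation is correct but your concluding line contradicts your own second step; the factor $\frac{1}{2}$ should be restored.
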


\begin{proof} Suppose that SW kernel $\Delta=U P(\boldsymbol{\nu}) U^{\dagger}$ is 
given by $P(\boldsymbol{\nu})=\mbox{diag}||\pi_{1},\pi_{2},\cdots,\,\pi_{N}||$, where the eigenvalues are presented in a decreasing order and that only $m$ of them are non negative. We assume the following notations
\begin{equation}
    Z_{1}=\sum_{i=1}^{m}\pi_{i},\quad Z_{2}=\sum_{i=m+1}^{N}|\pi_{i}|,\quad 
    M_{1}=\sum_{i=1}^{m}\pi_{i}^{2},\quad M_{2}=\sum_{i=m+1}^{N}\pi_{i}^{2}\,.  
\end{equation}
In this notations the equations SW kernel obeys may be rewritten as
\begin{eqnarray}
   && Z_{1}-Z_{2}=1\\\label{sweq1}
   && M_{1}+M_{2}=N.\label{sweq2}
\end{eqnarray}
Remembering that the matrices generated by the procedure $\rho=\frac{z z^{\dagger}}{tr(z z^{\dagger})}$, with normally distributed real and imaginary parts of  $\varrho$, uniformly  cover the set of density matrices with respect to Hilbert-Schmidt measure, we observe that $\mbox{Prob}[\mbox{tr}[P\rho]<0]$ is equal to the probability of event 
\begin{equation}
    \sum_{i=1}^{N}P_{i\,i}\,\rho_{i\,i} < 0\,,
\end{equation}
or alternatively, due to the positivity of ${\mbox{tr}}(z z^{\dagger})$ to the probability of  
\begin{equation}
    \sum_{i=1}^{N}P_{i\,i}\,
    (z z^{\dagger})_{i\,i} < 0\,.
\end{equation}
Equivalently rewriting this event we get
\begin{eqnarray}
\label{inequality1}
\sum_{i=1}^{m}\pi_{i}\,(z z^{\dagger})_{i\,i}
<\sum_{i=m+1}^{N}|\pi_{i}|\,
(z z^{\dagger})_{i\,i}\,.
\end{eqnarray}
Further we denote $\xi_{j}^{(i)}=z_{i,j} z_{i,j}^{*}$,
so that $(z z^{\dagger})_{i\,i}=\sum_{j=1}^{N}\xi_{j}^{(i)}$, and $\beta_{j}^{(i)}=|\pi_{i}|\,\xi_{j}^{(i)}$. In this terms \ref{inequality1}  may be rewritten 
\begin{equation}
    \sum_{i=1}^{m}\sum_{j=1}^{N}\beta_{i}^{(j)}<\sum_{i=m+1}^{N}\sum_{j=1}^{N}\beta_{i}^{(j)}\,.\label{equation2}
\end{equation}
Since $Re(z_{i,j})$ and $Im(z_{i,j})$ are distributed by normal distribution with zero mean and unit variance, then $\xi$'s are distributed with $\chi^{2}_{2}$, distribution while $\beta_{j}^{(i)}$'s are distributed with mean $\mathbb{E}(\beta_{j}^{(i)})=2\,|\pi_{i}|$ and variance $\mbox{var}(\beta_{j}^{(i)})= 4 \pi_{i}^{2}$.
Now according to central limit theorem 
\begin{eqnarray}
   x&=&\frac{ \sum_{i=1}^{m}\sum_{j=1}^{N}\beta_{j}^{(i)}-\sum_{i=1}^{m}\sum_{j=1}^{N}\mathbb{E}\left(\beta_{j}^{(i)}\right)}{\left(\sum_{i=1}^{m}\sum_{j=1}^{N}\mbox{var}\left(\beta_{j}^{(i)}\right)\right)^{{\frac{1}{2}}}}=\\
   &&\frac{ \sum_{i=1}^{m}\sum_{j=1}^{N}\beta_{j}^{(i)}-2 N Z_{1}}{\sqrt{4 N M_{1}}}
\end{eqnarray}
as well as 
\begin{eqnarray}
   y&=&\frac{ \sum_{i=m+1}^{N}\sum_{j=1}^{N}\beta_{j}^{(i)}-\sum_{i=m+1}^{N}\sum_{j=1}^{N}\mathbb{E}\left(\beta_{j}^{(i)}\right)}{\left(\sum_{i=m+1}^{N}\sum_{j=1}^{N}\mbox{var}\left(\beta_{j}^{(i)}\right)\right)^{\frac{1}{2}}}=\\
   &&\frac{ \sum_{i=m+1}^{m}\sum_{j=1}^{N}\beta_{j}^{(i)}-2 N Z_{2}}{\sqrt{4 N M_{2}}}\,,
\end{eqnarray}
are distributed normally with zero mean and unit variance.
Transforming eq.~\ref{equation2} by subtracting from both sides $2 N(Z_{1}+Z_{2})$ and dividing by $\sqrt{4N M_{1} M_{2}}$ we get
\begin{equation}
  \frac{x}{\sqrt{M_{2}}}-\frac{N Z_{2}}{\sqrt{N M_{1} M_{2}}}< \frac{y}{\sqrt{M_{1}}}-\frac{N Z_1}{\sqrt{N M_{1} M_{2} }}\,,
\end{equation}
or taking into account the equations for the Statonovich-Weyl kernel we get

\begin{equation}
    x< y \sqrt{\frac{N-M_1}{M_1}}-\sqrt{\frac{N}{M_1}}.
\end{equation}
Now, let us denote  $t=\sqrt{\frac{N-M_1}{M_1}}$ so that the initial probability is equal to the probability of
\begin{equation}
    x<y\, t-\sqrt{t^{2}+1}.\label{inequality2}
\end{equation}
Since $x$ and $y$ are distributed normally the probability of the event described by equation (\ref{inequality2})
will be
\begin{equation}
    \mathcal{P}(t)=\frac{1}{2 \pi }\int_{-\infty }^{\infty }dy \int_{-\infty }^{y t-\sqrt{t^2+1}}dx \, e^{-\frac{y^2}{2}} e^{-\frac{x^2}{2}}. 
\end{equation}
It may be checked that
\begin{equation}
    \frac{d \mathcal{P}(t)}{d t}=\frac{1}{2 \pi}\int_{-\infty }^{\infty } dy\,\frac{e^{-\frac{y^2}{2}-\frac{1}{2} \left(\sqrt{1+t^2}-t y\right)^2} \left(-t+\sqrt{1+t^2} y\right)}{\sqrt{t^{2}+1}}=0
\end{equation}
Hence, $\mathcal{P}(t)=\mathcal{P}(0)=\frac{1}{2}\, \mbox{erfc}\left(\frac{1}{\sqrt{2}}\right)=0.158655$, where $\mbox{erfc}$ is the complimentary error function. Which proves the preposition.
\end{proof}
\section{Conclusions}\label{section5}
Summarizing, in this work we have introduced a global measure of non-classicality of the state space  of $N$-level quantum systems. By computer simulations the measure was computed for several SW kernel families  depending on the number of levels. It was proven that for large $N$ the measure of non-classicality does not depend on the choice of the SW kernel, thus it is an invariant over the moduli space of SW kernels. 

However, it must be noted, that it is  unreasonable to suppose that a single measure might capture all the non-classical aspects of quantum states, let alone of the whole state space. Hence, there must be different ways of defining non-classicality measures underlining this or that features of quantum behaviour (see for comparison \cite{AVA2}). 
\section{Acknowledgments}
The publication has been prepared with the support of the “RUDN University Program 5-100” (recipient V.A.). 


\begin{thebibliography}{}
\bibitem{Dahl}J.P.Dahl, H.Mack, Wolf and W. P. Schleich, {\it Entanglement versus negative domains of Wigner functions}, Phys. Rev. {\bf{A}} 74, 042323 (2006).
\bibitem{Mari}A.Mari and J.Eisert, {\it{Positive Wigner functions render classical simulation of quantum computation efficient}}, Phys. Rev. Lett. {\bf{109}}, 230503 (2012).
\bibitem{Ievgen}Ievgen I.Arkhipov, Artur Barasiński and Jiří Svozilík, {\it Negativity volume of the generalized Wigner function as an entanglement witness for hybrid bipartite states}, Scientific Reports  {\bf {8}},  16955 (2018).
\bibitem{Zyczkowski}Anatole Kenfack and Karol Zyczkowski, {\it{Negativity of the Wigner function as an indicator of nonclassicality}}, 	J. Opt. {\bf{B}}: Quantum Semiclass. Opt. {\bf {6}} 396-404 (2004). 

\bibitem{Wigner}E.P.Wigner, {\it On the quantum correction for thermodynamic equilibrium}, Phys. Rev. {\bf{40}}, 749 (1932).
\bibitem{Stratonovich}R.L.Stratonovich, {\it {On distributions in representation space}}, Sov. Phys. JETP {\bf{4}}, 6, 891 (1957).
\bibitem{AbKh1}Vahagn Abgaryan and Arsen Khvedelidze,
{\textit{On families of Wigner functions for N-level quantum systems}},
{\text{https://arxiv.org/pdf/1708.05981.pdf}} (2018).
\bibitem{AVA1}V.Abgaryan, A.Khvedelidze and A. Torosyan,
\textit{On moduli space of the Wigner quasiprobability
distributions for N-dimensional quantum systems},
Journal of Mathematical Sciences volume {\bf{240}}, 617 (2019)
\bibitem{Wheeler}John A.Wheeler. {\it{Superspace and the nature of quantum geometrodynamics}. {\it{In Cecile M.DeWitt and John A.Wheeler, editors, Battelle Rencontres}, 1967. Lectures in Mathematics and Physics, 242–307, }}(1968).
\bibitem{AVA2}V. Abgaryan, A. Khvedelidze and A. Torosyan, \textit{The global indicator of classicality of an
arbitrary N-level quantum system}, Zap. Nauchn. Sem. POMI,  \textbf{485}, 5 (2019).

\end{thebibliography}
\end{document}